\documentclass[twocolumn,superscriptaddress,
showpacs,preprintnumbers,amsmath,amssymb]{revtex4}

\usepackage{amssymb,amsmath,amsthm}
\usepackage{epsfig}
\newtheorem{Thm}{Theorem}

\newtheorem{Lem}[Thm]{Lemma}

\theoremstyle{definition}

\newtheorem{Exam}[Thm]{Example}

\newcommand{\bra}[1]{{\left\langle #1 \right|}}
\newcommand{\ket}[1]{{\left| #1 \right\rangle}}
\newcommand{\inn}[2]{{\left\langle #1 | #2 \right\rangle}}

\newcommand{\T}{\mbox{$\mathrm{tr}$}}

\begin{document}
\title{Monogamy equality in $2\otimes 2 \otimes d$ quantum systems}

\author{Dong Pyo Chi}
\affiliation{
 Department of Mathematical Sciences,
 Seoul National University, Seoul 151-742, Korea
}
\author{Jeong Woon Choi}
\affiliation{
 Department of Mathematical Sciences,
 Seoul National University, Seoul 151-742, Korea
}
\author{Kabgyun Jeong}
\affiliation{
 Nano Systems Institute (NSI-NCRC),
 Seoul National University, Seoul 151-742, Korea
}
\author{Jeong San Kim}
\affiliation{
 Institute for Quantum Information Science,
 University of Calgary, Alberta T2N 1N4, Canada
}
\author{Taewan Kim}
\affiliation{
 Department of Mathematical Sciences,
 Seoul National University, Seoul 151-742, Korea
}
\author{Soojoon Lee}
\affiliation{
 Department of Mathematics and Research Institute for Basic Sciences,
 Kyung Hee University, Seoul 130-701, Korea
}

\date{\today}

\begin{abstract}
There is an interesting property about multipartite entanglement,
called the monogamy of entanglement.
The property can be shown by the monogamy inequality,
called the Coffman-Kundu-Wootters inequality~[Phys. Rev. A {\bf 61}, 052306 (2000);
Phys. Rev. Lett. {\bf 96}, 220503 (2006)],
and
more explicitly
by the monogamy equality
in terms of the concurrence
and the concurrence of assistance,
$\mathcal{C}_{A(BC)}^2=\mathcal{C}_{AB}^2+(\mathcal{C}_{AC}^a)^2$,
in the three-qubit system.
In this paper,
we consider the monogamy equality
in $2\otimes 2 \otimes d$ quantum systems.
We show that $\mathcal{C}_{A(BC)}=\mathcal{C}_{AB}$ if and only if $\mathcal{C}_{AC}^a=0$,
and also show that
if $\mathcal{C}_{A(BC)}=\mathcal{C}_{AC}^a$ then $\mathcal{C}_{AB}=0$,
while there exists a state in a $2\otimes 2 \otimes d$ system such that
$\mathcal{C}_{AB}=0$ but $\mathcal{C}_{A(BC)}>\mathcal{C}_{AC}^a$.
\end{abstract}

\pacs{
03.65.Ud, 
03.67.Mn  
}
\maketitle

Entanglement provides us with a lot of useful applications
in quantum communications,
such as quantum key distribution and teleportation.
In order to apply entanglement to more various and useful quantum information processing,
there are several important things which we should take into account.
One is to quantify the degree of entanglement,
and another one is to know about more properties of entanglement.
We here consider two measures of entanglement,
and investigate some properties of entanglement related to the two entanglement measures
in multipartite systems,
especially $2\otimes 2 \otimes d$ quantum systems.

Wootters' {\em concurrence}~\cite{Wootters}, $\mathcal{C}$
has been considered as one of the simplest measure of entanglement,
although there does not in general exist its explicit formula.
For any pure state $\ket{\phi}_{AB}$,
it is defined as $\mathcal{C}(\ket{\phi}_{AB})=\sqrt{2(1-\T\rho_A^2)}$,
where $\rho_A=\T_B\ket{\phi}_{AB}\bra{\phi}$.
Note that $\sqrt{2(1-\T\rho_A^2)}=2\sqrt{\det\rho_A}$ in $2\otimes d$ systems.
For any mixed state $\rho_{AB}$,
it is defined as
\begin{equation}
\mathcal{C}(\rho_{AB})=\min \sum_{k} p_k \mathcal{C}(\ket{\phi_k}_{AB}),
\label{eq:concurrence}
\end{equation}
where the minimum is taken over its all possible decompositions,
$\rho_{AB}=\sum_k p_k \ket{\phi_k}_{AB}\bra{\phi_k}$.
Recently, another measure of entanglement has been presented,
and it is called the {\em concurrence of assistance} (CoA)~\cite{LVE},
which is defined as
\begin{equation}
\mathcal{C}^a(\rho_{AB})=\max \sum_{k} p_k \mathcal{C}(\ket{\phi_k}_{AB}),
\label{eq:CoA}
\end{equation}
where the maximum is taken over all possible decompositions of $\rho_{AB}$.

In multiqubit systems,
there is an interesting property about multipartite entanglement,
which is called the {\em monogamy of entanglement} (MoE).
Coffman, Kundu, and Wootters (CKW)
first proposed the monogamy inequality~\cite{CKW}, which states
the MoE in the 3-qubit system,
\begin{equation}
\mathcal{C}_{A(BC)}^2\ge\mathcal{C}_{AB}^2+\mathcal{C}_{AC}^2,
\end{equation}
and then its generalization was proved by Osborne and Verstraete~\cite{OV}.
Symmetrically, its dual inequality in terms of the CoA for 3-qubit states,
\begin{equation}
\mathcal{C}_{A(BC)}^2\le(\mathcal{C}_{AB}^a)^2+(\mathcal{C}_{AC}^a)^2,
\end{equation}
and its generalization into $n$-qubit cases have been also shown in~\cite{GMS,GBS}.

In particular, for 3-qubit states,
it can be readily proved that the monogamy equality~\cite{LJK,YS},
\begin{equation}
\mathcal{C}_{A(BC)}^2=\mathcal{C}_{AB}^2+\left(\mathcal{C}_{AC}^a\right)^2
\label{eq:ME222}
\end{equation}
holds.
We note that this monogamy equality shows the MoE
more explicitly than the CKW inequality.
Thus, it could be important to investigate
whether the monogamy equality would be possible
in any higher dimensional tripartite quantum systems,
and could be helpful for us to understand multipartite entanglement.

In this paper, we consider the monogamy equality in $2\otimes 2 \otimes d$ systems.
We show that $\mathcal{C}_{A(BC)}=\mathcal{C}_{AB}$ if and only if $\mathcal{C}_{AC}^a=0$,
and also show that if $\mathcal{C}_{A(BC)}=\mathcal{C}_{AC}^a$ then $\mathcal{C}_{AB}=0$,
whereas there exists a state in a $2\otimes 2 \otimes d$ system such that
$\mathcal{C}_{AB}=0$ but $\mathcal{C}_{A(BC)}>\mathcal{C}_{AC}^a$.

%
%


Now, we present the first main theorem.
\begin{Thm}\label{Thm:1}
Let $\ket{\Psi}_{ABC}$ be a state in a $2\otimes 2 \otimes d$ system.
Then the followings are equivalent.
\begin{enumerate}
\item[\em (i)] $\ket{\Psi}$ is of the form
    $\ket{\phi}_A \otimes \ket{\psi}_{BC}$ or $\ket{\phi'}_C \otimes \ket{\psi'}_{AB}$.
\item[\em (ii)] $\mathcal{C}_{AC}^{a}=0$.
\item[\em (iii)] $\mathcal{C}_{A(BC)}=\mathcal{C}_{AB}$.
\end{enumerate}
\end{Thm}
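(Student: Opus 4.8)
The plan is to establish the four implications (i)$\,\Rightarrow\,$(ii), (i)$\,\Rightarrow\,$(iii), (ii)$\,\Rightarrow\,$(i), and (iii)$\,\Rightarrow\,$(ii); together these give the equivalence. The two implications out of (i) I would settle by direct computation. If $\ket\Psi=\ket\phi_A\otimes\ket\psi_{BC}$, then $\rho_A=\ket\phi\bra\phi$ is pure, so $\mathcal{C}_{A(BC)}=0$; also $\rho_{AB}=\ket\phi\bra\phi_A\otimes\rho_B$ is a product state, so $\mathcal{C}_{AB}=0=\mathcal{C}_{A(BC)}$; and $\rho_{AC}=\ket\phi\bra\phi_A\otimes\rho_C$, whose range contains only product vectors, so $\mathcal{C}_{AC}^{a}=0$. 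If $\ket\Psi=\ket{\phi'}_C\otimes\ket{\psi'}_{AB}$, then $C$ factors out, so $\rho_A$ is the $A$-marginal of the pure state $\ket{\psi'}_{AB}$ and $\rho_{AB}=\ket{\psi'}\bra{\psi'}$ is pure, giving $\mathcal{C}_{A(BC)}=\sqrt{2(1-\T\rho_A^2)}=\mathcal{C}(\ket{\psi'}_{AB})=\mathcal{C}_{AB}$; and $\rho_{AC}=\rho_A\otimes\ket{\phi'}\bra{\phi'}_C$ again has only product vectors in its range, so $\mathcal{C}_{AC}^{a}=0$.

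For (iii)$\,\Rightarrow\,$(ii), the first step is to show that $\mathcal{C}_{A(BC)}$ dominates the concurrence of assistance of $\rho_{AB}$. For any pure-state decomposition $\rho_{AB}=\sum_k p_k\ket{\phi_k}\bra{\phi_k}$ one has $\rho_A=\sum_k p_k\rho_A^{(k)}$, so convexity of $X\mapsto\T(X^2)$ together with $\mathcal{C}(\ket{\phi_k})^2=2(1-\T(\rho_A^{(k)})^2)$ and Jensen's inequality yield $\mathcal{C}_{A(BC)}^2=2(1-\T\rho_A^2)\ge\sum_k p_k\mathcal{C}(\ket{\phi_k})^2\ge(\sum_k p_k\mathcal{C}(\ket{\phi_k}))^2$, hence $\mathcal{C}_{A(BC)}\ge\mathcal{C}_{AB}^{a}\ge\mathcal{C}_{AB}$. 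Assumption (iii) collapses this to $\mathcal{C}_{AB}^{a}=\mathcal{C}_{AB}$. If $\mathcal{C}_{AB}=0$, then $\mathcal{C}_{A(BC)}=0$, $\rho_A$ is pure, $\ket\Psi$ has the first form of (i), and (ii) follows from (i)$\,\Rightarrow\,$(ii). If $\mathcal{C}_{AB}>0$, then, $\rho_{AB}$ being a two-qubit state, Wootters' formula $\mathcal{C}_{AB}=\lambda_1-\lambda_2-\lambda_3-\lambda_4$~\cite{Wootters} and the Laustsen--Verstraete--van Enk formula $\mathcal{C}_{AB}^{a}=\lambda_1+\lambda_2+\lambda_3+\lambda_4$~\cite{LVE} (with $\lambda_1\ge\cdots\ge\lambda_4\ge0$ the square roots of the eigenvalues of $\rho_{AB}\tilde\rho_{AB}$) force $\lambda_2=\lambda_3=\lambda_4=0$, i.e. $\mathrm{rank}(\rho_{AB}\tilde\rho_{AB})=1$. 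A short lemma then gives $\mathrm{rank}(\rho_{AB})\le2$: with $r=\mathrm{rank}(\rho_{AB})=\mathrm{rank}(\tilde\rho_{AB})$, since $\mathrm{range}(\rho_{AB}\tilde\rho_{AB})=\rho_{AB}(\mathrm{range}(\tilde\rho_{AB}))$ we get $1=\dim\rho_{AB}(\mathrm{range}(\tilde\rho_{AB}))\ge r-\dim\ker\rho_{AB}=2r-4$, so $r\le2$. Hence $\mathrm{rank}(\rho_C)=\mathrm{rank}(\rho_{AB})\le2$, and restricting $C$ to its support exhibits $\ket\Psi$ as a three-qubit state without changing $\mathcal{C}_{A(BC)}$, $\mathcal{C}_{AB}$, or $\mathcal{C}_{AC}^{a}$. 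The three-qubit monogamy equality~\eqref{eq:ME222} now reads $\mathcal{C}_{A(BC)}^2=\mathcal{C}_{AB}^2+(\mathcal{C}_{AC}^{a})^2$, and since $\mathcal{C}_{A(BC)}=\mathcal{C}_{AB}$ this forces $\mathcal{C}_{AC}^{a}=0$.

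For (ii)$\,\Rightarrow\,$(i), I would use that, $B$ being a qubit, $\rho_{AC}$ has rank at most two. If its rank is one, then $\rho_{AC}=\ket x\bra x$ with $\ket x$ a product vector (since $\mathcal{C}_{AC}^{a}=0$), so $\rho_A$ is pure and $\ket\Psi$ has the first form. If the rank is two, let $\ket{x_0},\ket{x_1}$ be the Schmidt vectors of $\ket\Psi$ across $B\,|\,AC$; then $S:=\mathrm{span}\{\ket{x_0},\ket{x_1}\}\subseteq\mathbb{C}^2_A\otimes\mathbb{C}^d_C$ consists entirely of product vectors, so in particular $\ket{x_j}=\ket{a_j}_A\ket{c_j}_C$. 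If $\ket{a_0}\parallel\ket{a_1}$, then $S=\ket{a}_A\otimes\mathrm{span}\{\ket{c_0},\ket{c_1}\}_C$, so $\rho_A$ is pure and $\ket\Psi=\ket{a}_A\otimes\ket\psi_{BC}$. If $\ket{a_0}\not\parallel\ket{a_1}$, writing a generic element of $S$ as a $2\times d$ matrix $\alpha\,\mathbf a_0\mathbf c_0^{T}+\beta\,\mathbf a_1\mathbf c_1^{T}=[\mathbf a_0\,|\,\mathbf a_1]\,\mathrm{diag}(\alpha,\beta)\,[\mathbf c_0\,|\,\mathbf c_1]^{T}$ shows that for $\alpha,\beta\ne0$ its rank is $2$ unless $\ket{c_0}\parallel\ket{c_1}$; the all-product hypothesis therefore forces $\ket{c_0}\parallel\ket{c_1}$, whence $S=\mathbb{C}^2_A\otimes\ket{c}_C$, $\rho_C$ is pure, and $\ket\Psi=\ket{\psi'}_{AB}\otimes\ket{c}_C$, the second form.

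I expect the crux to be the rank lemma used in (iii)$\,\Rightarrow\,$(ii): it is precisely what reduces the $2\otimes2\otimes d$ problem to the already-available three-qubit identity, and the rest is either routine or elementary linear algebra. A secondary point needing care is verifying, in (ii)$\,\Rightarrow\,$(i), that the two alternatives $\ket{a_0}\parallel\ket{a_1}$ and $\ket{c_0}\parallel\ket{c_1}$ are genuinely exhaustive, for which one also invokes the orthonormality of $\ket{x_0},\ket{x_1}$, not merely the product structure of $S$.
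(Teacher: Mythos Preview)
Your argument is correct and the four implications do establish the equivalence, but the route differs from the paper's in both nontrivial directions. For (ii)$\Rightarrow$(i) the paper argues by contradiction via HJW: if neither $\rho_A$ nor $\rho_C$ is pure, a separable decomposition of $\rho_{AC}$ contains two product terms with non-parallel $A$-factors and non-parallel $C$-factors, and an HJW superposition of these is entangled, contradicting $\mathcal{C}_{AC}^a=0$; you instead classify two-dimensional totally-product subspaces of $\C^2\otimes\C^d$ directly. For (iii)$\Rightarrow$(i) the paper invokes the Lewenstein--Sanpera decomposition $\rho_{AB}=\lambda\rho_s+(1-\lambda)P_e$ and computes $\rho_A$ explicitly to force $\lambda=0$, so that $\rho_{AB}$ is pure and $\ket{\Psi}=\ket{\phi'}_C\otimes\ket{\psi'}_{AB}$; you instead combine the Wootters and Laustsen--Verstraete--van~Enk closed forms to obtain $\lambda_2=\lambda_3=\lambda_4=0$, then use a Sylvester-type rank bound to get $\mathrm{rank}\,\rho_{AB}\le 2$ and reduce to the three-qubit identity~\eqref{eq:ME222}. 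Your path is more conceptual---it makes transparent why the $2\otimes 2\otimes d$ problem collapses to $2\otimes 2\otimes 2$---at the cost of importing the three-qubit monogamy equality as a black box, whereas the paper's Lewenstein--Sanpera computation is self-contained. One small correction to your closing remark: in (ii)$\Rightarrow$(i) the exhaustiveness of the dichotomy $\ket{a_0}\parallel\ket{a_1}$ or $\ket{c_0}\parallel\ket{c_1}$ is driven entirely by the fact that \emph{every} vector of $S=\mathrm{range}(\rho_{AC})$ is product (which you should justify: it follows from $\mathcal{C}_{AC}^a=0$ together with HJW, since any direction in the range occurs in some ensemble); the orthonormality of $\ket{x_0},\ket{x_1}$ plays no role there.
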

In order to prove Theorem~\ref{Thm:1},
we introduce the following lemma,
which is called the {\em Lewenstein-Sanpera decomposition}
for two-qubit states~\cite{LS}.
\begin{Lem}\label{Lem:LS_decomposition}
Let $\rho$ be a density matrix on $\mathbb{C}^2\otimes \mathbb{C}^2$.
Then $\rho$ has a unique decomposition in the form
$\rho=\lambda\rho_s+(1-\lambda)P_e$,
where $\rho_s$ is a separable density matrix,
$P_e=\ket{\Psi_e}\bra{\Psi_e}$ for a pure entangled state $\ket{\Psi_e}$,
and $\lambda\in [0,1]$ is maximal.
\end{Lem}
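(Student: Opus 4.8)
The plan is to realize the decomposition as a best separable approximation and then exploit the special low-dimensional geometry of $\mathbb{C}^2\otimes\mathbb{C}^2$ to force the entangled remainder to be pure and to force uniqueness. First I would establish existence. Let $\mathcal{S}$ be the set of separable density matrices on $\mathbb{C}^2\otimes\mathbb{C}^2$, which is convex and compact, and consider the feasible set $\mathcal{F}=\{(\lambda,\rho_s): \lambda\in[0,1],\ \rho_s\in\mathcal{S},\ \rho-\lambda\rho_s\geq 0\}$. It is nonempty (take $\lambda=0$, $\rho_s$ any separable state) and is a closed subset of the compact set $[0,1]\times\mathcal{S}$, hence compact; since $(\lambda,\rho_s)\mapsto\lambda$ is continuous, $\lambda:=\max\{\lambda:(\lambda,\rho_s)\in\mathcal{F}\}$ is attained at some separable $\rho_s$. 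Setting $R:=\rho-\lambda\rho_s\geq 0$, it remains to show that $R$ is a multiple of a pure entangled projector and that the triple $(\lambda,\rho_s,P_e)$ is uniquely determined.

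Next I would prove that $R$ has rank at most one. The key geometric fact, special to $\mathbb{C}^2\otimes\mathbb{C}^2$, is that every subspace of dimension at least two contains a product vector: projectively the product vectors form the Segre quadric surface $Q\subset\mathbb{P}^3$, and by Bezout any projective line—i.e.\ any $2$-dimensional linear subspace of $\mathbb{C}^4$—meets $Q$, so a product vector lies in the range of any operator of rank at least two. Suppose $\mathrm{rank}(R)\geq 2$. Then there is a product vector $\ket{e}\otimes\ket{f}\in\mathrm{Range}(R)$; writing $\Pi=(\ket{e}\otimes\ket{f})(\bra{e}\otimes\bra{f})$ and using $R\geq 0$, we have $R-\epsilon\Pi\geq 0$ for all sufficiently small $\epsilon>0$. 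Re-absorbing this product term, $\rho=(\lambda\rho_s+\epsilon\Pi)+(R-\epsilon\Pi)$ exhibits a separable summand $\lambda\rho_s+\epsilon\Pi$ of trace $\lambda+\epsilon$, so after normalization $\rho$ has a separable weight exceeding $\lambda$, contradicting maximality. Hence $\mathrm{rank}(R)\leq 1$: either $R=0$ and $\rho=\rho_s$ is separable with $\lambda=1$, or $R=(1-\lambda)\ket{\Psi_e}\bra{\Psi_e}$ is pure, and $\ket{\Psi_e}$ must be entangled, for if it were a product vector then $\rho$ would be separable and we could have taken $\lambda=1$.

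Finally I would establish uniqueness. Suppose $\rho=\lambda\rho_s+(1-\lambda)P_e=\lambda'\rho_s'+(1-\lambda')P_e'$ are two decompositions of the stated form; since each $\lambda$ lies in $\mathcal{F}$ and both are required to be maximal, $\lambda=\lambda'$. Averaging gives $\rho=\lambda\bar\rho_s+(1-\lambda)\bar P$ with $\bar\rho_s=\tfrac12(\rho_s+\rho_s')$ separable and $\bar P=\tfrac12(P_e+P_e')$. If $P_e\neq P_e'$ then $\bar P$ has rank two, so its range again contains a product vector by the Segre/quadric argument, and subtracting a small multiple of the associated $\Pi$ as above produces a separable weight exceeding $\lambda$, a contradiction. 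Hence $P_e=P_e'$, and then $\lambda\rho_s=\lambda\rho_s'$ forces $\rho_s=\rho_s'$ when $\lambda>0$ (the cases $\lambda=0,1$ being trivial), so the decomposition is unique. The main obstacle is precisely this purity-and-uniqueness step: it rests entirely on the dimension-counting fact that a subspace of dimension at least two in $\mathbb{C}^2\otimes\mathbb{C}^2$ always contains a product vector, which is exactly the feature that fails in higher-dimensional bipartite systems, where the entangled remainder need not be pure.
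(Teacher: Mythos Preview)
The paper does not prove this lemma at all; it is quoted from Lewenstein and Sanpera~\cite{LS} and used as a black box in the proof of Theorem~\ref{Thm:1}. So there is no ``paper's own proof'' to compare against.

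Your argument is correct and is essentially the standard proof of the Lewenstein--Sanpera best separable approximation in the $2\times 2$ case. The compactness step for existence is clean; the crucial step---that the optimal remainder $R$ has rank at most one---is handled exactly as in the original: any subspace of $\mathbb{C}^2\otimes\mathbb{C}^2$ of dimension $\ge 2$ contains a product vector (your Segre/B\'ezout phrasing is a nice way to say this), and subtracting a small multiple of the corresponding rank-one product projector from $R$ would increase the separable weight, contradicting maximality. The fact that $R-\epsilon\Pi\ge 0$ for small $\epsilon$ when the product vector lies in $\mathrm{Range}(R)$ follows because $R$ is strictly positive on its own range. Your uniqueness argument via averaging two optimal decompositions and rerunning the rank argument on $\tfrac12(P_e+P_e')$ is also the standard trick and is sound. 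The only cosmetic point is the degenerate endpoints: when $\lambda=1$ the projector $P_e$ is not determined by the decomposition (its coefficient vanishes), and when $\lambda=0$ the separable part $\rho_s$ is likewise undetermined; the uniqueness claim should be read modulo these trivial cases, as you note.
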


We now give the proof of the first main theorem.

\begin{proof}[Proof of Theorem~\ref{Thm:1}]
We first prove that (i) is equivalent to (ii).
Since $\rho_{AC}$ is in the form of
$\ket{\psi}_A\bra{\psi}\otimes\sigma_C$ or $\ket{\psi}_C\bra{\psi}\otimes\sigma_A$,
it is trivial that $\mathcal{C}_{AC}^{a}=0$.
Conversely, suppose that
$\rho_{AC}$ is not in the form of
$\ket{\psi}_A\bra{\psi}\otimes\sigma_C$ or $\ket{\psi}_C\bra{\psi}\otimes\sigma_A$.
Then $\rho_A$ and $\rho_C$ have at least rank 2.
Since $\mathcal{C}_{AC}^a = 0$,
\begin{equation}
\rho_{AC}=\sum_i {p_i}\ket{\phi_i}_A\bra{\phi_i}\otimes\ket{\psi_i}_C\bra{\psi_i},
\label{eq:rho_AC}
\end{equation}
and there exists at least one pair $(i,j)$ such that
$|\inn{\phi_i}{\phi_j}|\neq 1$ and $|\inn{\psi_i}{\psi_j}|\neq 1$.
By Hughston-Jozsa-Wootters (HJW) theorem~\cite{HJW},
$\rho_{AC}=\sum_k{q_k}\ket{\Phi_k}_{AC}\bra{\Phi_k}$ such that at least one
\begin{equation}
\ket{\Phi_k}_{AC}=\alpha\ket{\phi_i}_A\ket{\psi_i}_C
+\beta\ket{\phi_j}_A\ket{\psi_j}_C
\label{eq:Phi_k}
\end{equation}
is entangled ($\alpha\neq 0$ and $\beta\neq 0$),
and hence $\mathcal{C}_{AC}^{a} > 0$.

Since if $\ket{\Psi}_{ABC}$ is in the form
$\ket{\phi}_A \otimes \ket{\psi}_{BC}$ or $\ket{\phi'}_C \otimes \ket{\psi'}_{AB}$
then it is clear that $\mathcal{C}_{A(BC)}=\mathcal{C}_{AB}$,
the final one for completing the proof of this theorem,
is to show that if $\mathcal{C}_{A(BC)}=\mathcal{C}_{AB}$
then $\ket{\Psi}_{ABC}$ has the form
$\ket{\phi}_A \otimes \ket{\psi}_{BC}$ or $\ket{\phi'}_C \otimes \ket{\psi'}_{AB}$.

We assume that $\mathcal{C}_{A(BC)}=\mathcal{C}_{AB}\neq 0$
(If $\mathcal{C}_{A(BC)} = \mathcal{C}_{AB} = 0$ then
$\ket{\Psi}_{ABC}$ is of the form $\ket{\phi}_A \otimes \ket{\psi}_{BC}$,
and so this theorem is trivially true).
Then we clearly have $\mathcal{C}_{A(BC)}=\mathcal{C}_{AB}^a=\mathcal{C}_{AB}$,
that is, the average concurrence of any decomposition of $\rho_{AB}$ is equal to
$\mathcal{C}_{A(BC)}$.
By Lemma~\ref{Lem:LS_decomposition},
$\rho_{AB}=\lambda\rho_s + (1-\lambda)P_e$,
where $\rho_s$ is separable and $P_e$ is purely entangled.
Then since $\mathcal{C}_{A(BC)}=\mathcal{C}_{AB}=(1-\lambda)\mathcal{C}_{AB}(P_e)$,
we can see that $\rho_s$ is in the form of $\ket{0}_A\bra{0}\otimes \sigma_B$ or
$(x\ket{0}_A\bra{0}+y\ket{1}_A\bra{1})\otimes \ket{0}_B\bra{0}$
up to local unitary operations.

We now let
\begin{equation}
\begin{pmatrix}
 a & b \\
 b^{\ast} & c \\
\end{pmatrix}
\equiv (1-\lambda)\T_B(P_e).
\label{eq:matrix01}
\end{equation}
Then $(1-\lambda)\mathcal{C}_{AB}(P_e)=2\sqrt{ac-|b|^2}$,
and
\begin{equation}
\rho_A=
\begin{pmatrix}
\lambda+a& b\\
b^*& c
\end{pmatrix}
\hbox{ or }
\begin{pmatrix}
\lambda{x}+a& b\\
b^*& \lambda{y}+c
\end{pmatrix}.
\label{eq:matrix02}
\end{equation}
Thus, since $\mathcal{C}_{A(BC)}=2\sqrt{(\lambda+a)c-|b|^2}$
or $2\sqrt{(\lambda{x}+a)(\lambda{y}+c)-|b|^2}$,
it is obtained that $\lambda=0$, that is, $\rho_{AB}=P_e$.
Therefore, we can conclude that $\ket{\Psi}_{ABC}$ is of the form
$\ket{\phi}_C \otimes \ket{\psi}_{AB}$.
\end{proof}

We now present the second main theorem.
\begin{Thm}\label{Thm:2}
If $\mathcal{C}_{A(BC)}=\mathcal{C}_{AC}^{a}$ then $\mathcal{C}_{AB}=0$.
\end{Thm}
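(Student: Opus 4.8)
\emph{Proof proposal.} The idea is to mimic the last part of the proof of Theorem~\ref{Thm:1}, but now by saturating a general inequality between $\mathcal{C}_{AC}^{a}$ and $\mathcal{C}_{A(BC)}$ (rather than between $\mathcal{C}_{AB}$ and $\mathcal{C}_{A(BC)}$) and then showing $\rho_{AB}$ is PPT. First I would dispose of the degenerate cases: if $\mathcal{C}_{A(BC)}=0$ then Theorem~\ref{Thm:1} gives $\ket\Psi=\ket\phi_A\otimes\ket\psi_{BC}$, so $\rho_{AB}$ is a product state and $\mathcal{C}_{AB}=0$; and if $\rho_{AC}$ is pure then $\ket\Psi=\ket\psi_B\otimes\ket\chi_{AC}$, and again $\mathcal{C}_{AB}=0$. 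Since $\mathrm{rank}\,\rho_{AC}=\mathrm{rank}\,\rho_B\le2$, I may henceforth assume $\mathrm{rank}\,\rho_{AC}=2$ and $\mathcal{C}_{A(BC)}\neq0$, so that $\rho_A$ has rank $2$ and is invertible on $\mathbb{C}^2$.

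The first substantive step is the general bound $\mathcal{C}_{AC}^{a}\le\mathcal{C}_{A(BC)}$. For any decomposition $\rho_{AC}=\sum_k p_k\ket{\chi_k}\bra{\chi_k}$ one has $\mathcal{C}(\ket{\chi_k}_{AC})=2\sqrt{\det\rho_A^{(k)}}$ with $\rho_A^{(k)}=\T_C\ket{\chi_k}\bra{\chi_k}$ and $\sum_k p_k\rho_A^{(k)}=\rho_A$; since $\rho\mapsto\sqrt{\det\rho}$ is concave on the cone of $2\times2$ positive matrices (Minkowski's determinant inequality), $\sum_k p_k\mathcal{C}(\ket{\chi_k}_{AC})\le2\sqrt{\det\rho_A}=\mathcal{C}_{A(BC)}$, and taking the maximum gives $\mathcal{C}_{AC}^{a}\le\mathcal{C}_{A(BC)}$. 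Under the hypothesis this is an equality, so an optimal decomposition (one exists by compactness) saturates Minkowski's inequality; its equality case forces each $\rho_A^{(k)}$ to equal $\rho_A$ (the alternative, that all $\rho_A^{(k)}$ have rank $1$, would make $\mathcal{C}_{A(BC)}=\mathcal{C}_{AC}^{a}=0$, excluded). I would then reduce to two terms: picking one $\ket{\chi_1}$ from this decomposition and subtracting $t\ket{\chi_1}\bra{\chi_1}$ from $\rho_{AC}$ with $t$ maximal leaves a rank-one remainder $s\ket{\chi_2}\bra{\chi_2}$, so $\rho_{AC}=t\ket{\chi_1}\bra{\chi_1}+s\ket{\chi_2}\bra{\chi_2}$ with $t+s=1$, $t,s>0$; and $t\rho_A+s\,\T_C\ket{\chi_2}\bra{\chi_2}=\rho_A$ then forces $\T_C\ket{\chi_2}\bra{\chi_2}=\rho_A$ as well.

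Finally, this two-term decomposition of $\rho_{AC}$ is realized by a projective measurement of $B$ in some orthonormal basis $\{\ket{e_1},\ket{e_2}\}$ of the purification $\ket\Psi$, so that $\bra{e_k}_B\,\rho_{AB}\,\ket{e_k}_B=\T_C\!\left[(\bra{e_k}_B\ket\Psi)(\bra\Psi\ket{e_k}_B)\right]=p_k\rho_A$ for $k=1,2$. Thus, in the basis $\{\ket{e_1},\ket{e_2}\}$ of $B$, the two-qubit state $\rho_{AB}$ has block form $\left(\begin{smallmatrix}p_1\rho_A & \Gamma\\ \Gamma^{\dagger}& p_2\rho_A\end{smallmatrix}\right)$ with $\Gamma=\bra{e_1}_B\rho_{AB}\ket{e_2}_B$, while $\rho_{AB}^{T_B}$ has block form $\left(\begin{smallmatrix}p_1\rho_A & \Gamma^{\dagger}\\ \Gamma & p_2\rho_A\end{smallmatrix}\right)$. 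Setting $N=\rho_A^{-1/2}\Gamma\rho_A^{-1/2}$, the Schur-complement criterion (applicable since $\rho_A>0$) shows that $\rho_{AB}\ge0$ is equivalent to $N^{\dagger}N\le p_1p_2 I$ and $\rho_{AB}^{T_B}\ge0$ to $NN^{\dagger}\le p_1p_2 I$; since $\|N^{\dagger}N\|=\|NN^{\dagger}\|$, the positivity of $\rho_{AB}$ already implies $\rho_{AB}^{T_B}\ge0$, hence $\rho_{AB}$ is separable by the Peres criterion for two qubits and $\mathcal{C}_{AB}=0$. The only points needing care are the reduction to two terms and the small observation that a two-element rank-one POVM on a qubit that resolves the identity must be an orthonormal basis; everything else is a direct computation, and the absence (here) of a Lewenstein--Sanpera-type shortcut for the $2\otimes d$ state $\rho_{AC}$ is precisely why the argument is routed through $\rho_{AB}$ rather than $\rho_{AC}$.
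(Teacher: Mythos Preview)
Your proof is correct and shares the paper's overall architecture---use the Minkowski determinant inequality to show that saturation of $\mathcal{C}_{AC}^{a}\le\mathcal{C}_{A(BC)}$ forces every term of an optimal decomposition of $\rho_{AC}$ to have reduced state $\rho_A$, and then conclude that $\rho_{AB}$ is PPT---but the execution of both halves is genuinely different. The paper runs a three-case split on the ranks of the $\T_C\ket{\psi_i}\bra{\psi_i}$ and invokes a separate lemma (Lemma~\ref{Lem:rank}) to eliminate the mixed-rank case; you fold this into the Minkowski equality analysis in one stroke (your parenthetical ``alternative'' is a bit terse, but the underlying fact---that a nonzero rank-one summand strictly increases $\sqrt{\det}$ once a positive-definite part is present---is exactly Lemma~\ref{Lem:rank}). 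The more substantial divergence is in the PPT step: the paper keeps the full optimal ensemble, parametrizes two of its members by a local unitary $U$ on $C$, constructs an auxiliary purification $\ket{\Psi'}$, and checks by explicit (``tedious but straightforward'') computation that $\rho_{AB}^{T_B}=\T_C\ket{\Psi'}\bra{\Psi'}$. You instead reduce to a two-term decomposition, realize it by a projective measurement on the qubit $B$ (legitimate because the HJW matrix is $2\times2$ and hence unitary), and observe that the resulting block form has diagonal blocks $p_k\rho_A$; the Schur-complement criterion together with the isospectrality of $N^{\dagger}N$ and $NN^{\dagger}$ then gives PPT without any computation. Your route is more conceptual and shorter; the paper's route has the virtue of exhibiting $\rho_{AB}^{T_B}$ concretely as a bona fide reduced density matrix.
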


For the proof of Theorem~\ref{Thm:2},
we introduce the two following lemmas.
One is as follows.
\begin{Lem}\label{Lem:rank}
If $\rho$ and $\sigma$ are $2\times 2$ positive matrices with
$\mathrm{rank}(\rho) =1$ and $\mathrm{rank}(\sigma)=2$, respectively
then
for any $\lambda_j\ge 0$,
$\sqrt{\det(\lambda_0 \rho + \lambda_1 \sigma)}\ge \lambda_1  \sqrt{\det\sigma}$,
where the equality holds if and only if $\lambda_{0}=0$ or $\lambda_{1}=0$.
If $\rho$ and $\sigma_j$ are $2\times 2$ positive matrices with
$\mathrm{rank}(\rho) =1$ and $\mathrm{rank}(\sigma_j)=2$, respectively
then
for any $\alpha, \beta_j \ge 0$,
\begin{equation}
\sqrt{\det(\alpha \rho + \beta_0 \sigma_0 + \beta_1 \sigma_1)} \geq
\sqrt{\det(\beta_0\sigma_0+\beta_1\sigma_1)},
\label{eq:Lem2}
\end{equation}
where the equality holds if and only if $\alpha=0$ or $\beta_j=0$.
\end{Lem}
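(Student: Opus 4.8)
The plan is to reduce both statements to a single bilinear identity for $2\times2$ determinants. I would start from the polarization of the quadratic form $\det$ on $2\times2$ matrices: for any $M,N$,
\begin{equation}
\det(M+N)=\det M+\det N+D(M,N),\qquad D(M,N):=\T(M)\,\T(N)-\T(MN),
\end{equation}
where $D$ is symmetric and bilinear. For positive semidefinite $M,N$ one has $0\le\T(MN)\le\|M\|\,\T(N)\le\T(M)\,\T(N)$, so $D(M,N)\ge 0$; this nonnegativity is the source of the two inequalities.

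The one place the rank hypotheses are really used is the strict bound $D(\rho,\sigma)>0$ when $\rho$ is a nonzero positive semidefinite rank-$1$ matrix and $\sigma$ is positive definite. Writing $\rho=\ket v\bra v$ with $v\neq 0$ and using Cayley--Hamilton, $\T(\sigma)I-\sigma=\det(\sigma)\,\sigma^{-1}$ is positive definite, so $D(\rho,\sigma)=\bra v\big(\T(\sigma)I-\sigma\big)\ket v=\det(\sigma)\,\bra v\sigma^{-1}\ket v>0$. I expect this to be the only genuinely non-routine step.

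For the first assertion, $\mathrm{rank}(\rho)=1$ gives $\det\rho=0$, so bilinearity yields $\det(\lambda_0\rho+\lambda_1\sigma)=\lambda_1^2\det\sigma+\lambda_0\lambda_1 D(\rho,\sigma)\ge\lambda_1^2\det\sigma$; taking (nonnegative) square roots gives the stated bound, and equality forces $\lambda_0\lambda_1 D(\rho,\sigma)=0$, i.e.\ $\lambda_0=0$ or $\lambda_1=0$ by the previous paragraph.

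For the second assertion I would set $\tau:=\beta_0\sigma_0+\beta_1\sigma_1$. If $\beta_0=\beta_1=0$ then $\tau=0$ and both sides vanish (again because $\det\rho=0$). Otherwise $\tau$ is a nonnegative combination of positive definite matrices with at least one strictly positive coefficient, hence positive definite (rank $2$), and applying the first assertion with $\lambda_0=\alpha$, $\lambda_1=1$, $\sigma=\tau$ gives $\sqrt{\det(\alpha\rho+\tau)}\ge\sqrt{\det\tau}$ with equality iff $\alpha=0$. Combining the two cases, equality holds iff $\alpha=0$ or $\beta_0=\beta_1=0$, which is the claim.
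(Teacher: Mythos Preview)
Your proof is correct. It takes a different route from the paper's, though. The paper picks a basis in which $\sigma=a\ket{0}\bra{0}+b\ket{1}\bra{1}$ with $a,b>0$, writes $\rho=\ket{\psi}\bra{\psi}$ with $\ket{\psi}=x\ket{0}+y\ket{1}$, and computes the determinant by hand:
\[
\det(\lambda_0\rho+\lambda_1\sigma)=\lambda_0\lambda_1(|x|^2b+|y|^2a)+\lambda_1^2ab,
\]
from which the inequality and its equality case are read off; the second statement is then dispatched with ``Similarly.'' Your argument is coordinate-free: you polarize the quadratic form $\det$ on $2\times2$ matrices through $D(M,N)=\T(M)\T(N)-\T(MN)$, and the Cayley--Hamilton step $\T(\sigma)I-\sigma=\det(\sigma)\,\sigma^{-1}$ gives the strict positivity $D(\rho,\sigma)>0$ without choosing a basis. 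What this buys you is a clean reduction of the second statement to the first (via $\tau=\beta_0\sigma_0+\beta_1\sigma_1$), together with an explicit equality analysis---$\alpha=0$ or $\beta_0=\beta_1=0$---which the paper only asserts by analogy. The paper's computation, on the other hand, is more elementary and needs nothing beyond the $2\times2$ determinant formula. Both arrive at the same cross-term $\lambda_0\lambda_1(\cdot)>0$; yours simply identifies that cross-term intrinsically as $D(\rho,\sigma)$.
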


\begin{proof}
To begin with, we show the first statement.
Without loss of generality, we may assume that
$\rho=\ket{\psi}\bra{\psi}$ and $\sigma=a \ket{0}\bra{0}+ b \ket{1}\bra{1}$,
where $\ket{\psi}=x\ket{0}+y\ket{1}$ and $a, b >0$.
Then
\begin{eqnarray}
\sqrt{\det(\lambda_0 \rho + \lambda_1 \sigma)}
&=& \sqrt{\lambda_{0}\lambda_{1}(|x|^{2}b+|y|^{2}a)+\lambda_{1}^{2}ab}
\nonumber\\
&\ge& \sqrt{\lambda_{1}^{2}ab} \nonumber \\
&=&\lambda_1  \sqrt{\det\sigma}.
\label{eq:rank12}
\end{eqnarray}
It is clear that
the equality in (\ref{eq:rank12}) holds if and only if $\lambda_{0}=0$ or $\lambda_{1}=0$.
Similarly, we can show the second statement.
\end{proof}

The other lemma is called the {\em Minkowski determinant inequality theorem}~\cite{HJ}.
\begin{Lem}\label{Lem:MDI}
If $n\times n$ matrices $A$, $B$ are positive definite, then
\begin{equation}
\left[\det(A+B)\right]^{1/n}\ge\left(\det A\right)^{1/n}+\left(\det B\right)^{1/n}.
\label{eq:MDI}
\end{equation}
The equality in (\ref{eq:MDI}) holds if and only if $B=cA$ for some $c\ge 0$.
\end{Lem}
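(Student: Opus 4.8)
The plan is to reduce the matrix inequality to a purely scalar statement about eigenvalues and then settle that statement by the arithmetic--geometric mean (AM--GM) inequality. Since $A$ is positive definite it has a positive definite square root $A^{1/2}$, so I would set $C=A^{-1/2}BA^{-1/2}$, which is again positive definite. From $A+B=A^{1/2}(I+C)A^{1/2}$ and $B=A^{1/2}CA^{1/2}$ the determinants factor as $\det(A+B)=\det A\cdot\det(I+C)$ and $\det B=\det A\cdot\det C$. Dividing the target inequality~(\ref{eq:MDI}) by $(\det A)^{1/n}>0$ therefore reduces it to the equivalent form
\begin{equation}
\left[\det(I+C)\right]^{1/n}\ge 1+(\det C)^{1/n}.
\end{equation}

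Writing $\mu_1,\dots,\mu_n>0$ for the eigenvalues of $C$, we have $\det(I+C)=\prod_{i}(1+\mu_i)$ and $\det C=\prod_{i}\mu_i$, so the reduced inequality is exactly the superadditivity of the geometric mean,
\begin{equation}
\left(\prod_{i=1}^n(1+\mu_i)\right)^{1/n}\ge 1+\left(\prod_{i=1}^n\mu_i\right)^{1/n}.
\end{equation}
To prove this I would apply AM--GM separately to the two families $\{1/(1+\mu_i)\}$ and $\{\mu_i/(1+\mu_i)\}$, giving
\begin{align}
\left(\prod_{i}\frac{1}{1+\mu_i}\right)^{1/n}&\le\frac{1}{n}\sum_i\frac{1}{1+\mu_i}, \\
\left(\prod_{i}\frac{\mu_i}{1+\mu_i}\right)^{1/n}&\le\frac{1}{n}\sum_i\frac{\mu_i}{1+\mu_i}.
\end{align}
Adding these and using $1/(1+\mu_i)+\mu_i/(1+\mu_i)=1$ shows that the two geometric means sum to at most $1$; multiplying through by $\left(\prod_i(1+\mu_i)\right)^{1/n}$ then yields precisely the displayed inequality.

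For the equality case I would track when both AM--GM steps are simultaneously tight: this forces all the quantities $1/(1+\mu_i)$ equal and all the $\mu_i/(1+\mu_i)$ equal, hence all $\mu_i$ equal to a common value $\mu$. Then $C=\mu I$, so $B=\mu A$, and since $B$ is positive definite $\mu>0$, which is the stated condition $B=cA$ with $c\ge 0$. The linear-algebra reduction is routine; the main point is the scalar superadditivity of the geometric mean, and the one piece requiring care is arranging the AM--GM application so that the complementary families $\{1/(1+\mu_i)\}$ and $\{\mu_i/(1+\mu_i)\}$ combine to the constant $1$. Propagating the scalar equality condition back through the congruence to recover $B=cA$ is then immediate.
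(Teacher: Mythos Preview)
The paper does not actually prove this lemma; it is quoted as the Minkowski determinant inequality from Horn and Johnson's \emph{Matrix Analysis} and used as a black box. Your argument is correct and is essentially the classical proof found in that reference: reduce by the congruence $C=A^{-1/2}BA^{-1/2}$ to the normalized form $[\det(I+C)]^{1/n}\ge 1+(\det C)^{1/n}$, and then establish the scalar superadditivity of the geometric mean via the complementary AM--GM trick on $\{1/(1+\mu_i)\}$ and $\{\mu_i/(1+\mu_i)\}$.

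Two minor remarks on the equality analysis. First, equality in the displayed scalar inequality is equivalent, after dividing by $\bigl(\prod_i(1+\mu_i)\bigr)^{1/n}$, to equality in the sum of the two AM--GM bounds; since each AM--GM gap is nonnegative and they add to zero, both must individually be tight, which indeed forces all $\mu_i$ equal. You implicitly use this but it is worth making explicit. Second, you establish only the forward direction of the equality characterization (equality $\Rightarrow$ $B=cA$); the converse is immediate, since $B=cA$ gives $[\det((1+c)A)]^{1/n}=(1+c)(\det A)^{1/n}=(\det A)^{1/n}+(\det(cA))^{1/n}$, but a one-line mention would round out the proof.
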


In the proof of the second main theorem,
we will use Lemma~\ref{Lem:MDI}
just in the case of $n=2$.

\begin{proof}[Proof of Theorem~\ref{Thm:2}]
We first let
\begin{equation}
\rho_{AC}=\sum_{i\in I} \lambda_i \ket{\psi_i}_{AC}\bra{\psi_i}
\label{eq:optimal}
\end{equation}
be an optimal decomposition of $\rho_{AC}$ for the CoA, $\mathcal{C}_{AC}^a$.
Then we can consider the three cases
according to the rank of $\T_{C}(\ket{\psi_i}_{AC}\bra{\psi_i})$;
(i)~$\mathrm{rank} \left[\T_{C}(\ket{\psi_i}_{AC}\bra{\psi_i})\right]=1$ for all $i\in I$,
(ii)~there exist two nonempty subsets $I_1$ and $I_2=I-I_1$ of $I$ such that
$\mathrm{rank} \left[\T_{C}(\ket{\psi_i}_{AC}\bra{\psi_i})\right]=1$ for all $i\in I_1$
and $\mathrm{rank} \left[\T_{C}(\ket{\psi_j}_{AC}\bra{\psi_j})\right]=2$ for all $j\in I_2$,
(iii)~$\mathrm{rank} \left[\T_{C}(\ket{\psi_i}_{AC}\bra{\psi_i})\right]=2$ for all $i\in I$.

We now prove this theorem case by case.

(Case i) Assume that $\mathrm{rank} \left[\T_{C}(\ket{\psi_i}_{AC}\bra{\psi_i})\right]=1$
for all $i\in I$.
Then since $\ket{\psi_i}_{AC}\bra{\psi_i}$'s are all pure and separable states,
$\mathcal{C}_{AC}^{a}=0$ and so $\mathcal{C}_{A(BC)}=0$.
Thus, we have $\ket{\Psi}_{ABC}$ is of the form $\ket{\phi}_A \otimes \ket{\psi}_{BC}$,
and it immediately follows that $\mathcal{C}_{AB}=0$.

(Case ii) Assume that
there exist two nonempty subsets $I_1$ and $I_2=I-I_1$ of $I$ such that
$\mathrm{rank} \left[\T_{C}(\ket{\psi_i}_{AC}\bra{\psi_i})\right]=1$ for all $i\in I_1$
and $\mathrm{rank} \left[\T_{C}(\ket{\psi_j}_{AC}\bra{\psi_j})\right]=2$ for all $j\in I_2$.
The by Lemma~\ref{Lem:rank} and Lemma~\ref{Lem:MDI},
we can obtain the following inequality.
\begin{widetext}
\begin{eqnarray}
\mathcal{C}_{A(BC)}
&=&
2\sqrt{\det\left[\sum_{i\in I_1}\lambda_i \T_C(\ket{\psi_i}_{AC}\bra{\psi_i})
+\sum_{j\in I_2}\lambda_j \T_C(\ket{\psi_j}_{AC}\bra{\psi_j})\right]}
\ge
2\sqrt{\det\left[\sum_{j\in I_2}\lambda_j \T_C(\ket{\psi_j}_{AC}\bra{\psi_j})\right]}
\nonumber \\
&\ge& 2\sum_{j\in I_2}\lambda_j\sqrt{\det(\T_C(\ket{\psi_j}_{AC}\bra{\psi_j}))}
= \mathcal{C}_{AC}^{a}.
\label{eq:rank1122}
\end{eqnarray}
\end{widetext}
Since $\mathcal{C}_{A(BC)}=\mathcal{C}_{AC}^{a}$,
the equality in the first inequality should hold,
and hence $\lambda_i=0$ for all $i\in I_1$
or $\lambda_j=0$ for all $j\in I_2$ by Lemma~\ref{Lem:rank}.
This means that it is sufficient to consider the cases (i) and (iii).

(Case iii) We assume that
$\mathrm{rank} \left[\T_{C}(\ket{\psi_i}_{AC}\bra{\psi_i})\right]=2$ for all $i\in I$.
Then by Lemma~\ref{Lem:MDI},
\begin{eqnarray}
\mathcal{C}_{A(BC)}
&=& 2\sqrt{\det\left[\T_{C}\sum_{i\in I}\lambda_{i}\ket{\psi_i}_{AC}\bra{\psi_i}\right]}
\nonumber \\
&\ge& 2\sum_{i\in I}\lambda_{i}\sqrt{\det\left(\T_{C}\ket{\psi_i}_{AC}\bra{\psi_i}\right)}
\nonumber \\
&=& \mathcal{C}_{AC}^{a},
\label{eq:ineq2}
\end{eqnarray}
and the equality in the inequality (\ref{eq:ineq2}) holds if and only if
$\T_{C}\ket{\psi_i}_{AC}\bra{\psi_i}=\rho_{A}$
for all $i\in I$.

Let $\rho_A=\mu_{0}\ket{0}_A\bra{0}+\mu_{1}\ket{1}_A\bra{1}$
be its spectral decomposition.
By the Gisin-Hughston-Jozsa-Wootters theorem~\cite{HJW,Gisin}, for $0, 1\in I$,
there is a unitary operator $U$ such that
\begin{eqnarray}
\ket{\psi_{0}}_{AC}&=&\sqrt{\mu_{0}}\ket{0}_{A}\ket{0}_{C}+\sqrt{\mu_{1}}\ket{1}_{A}\ket{1}_{C},
\nonumber \\
\ket{\psi_{1}}_{AC}&=&\sqrt{\mu_{0}}\ket{0}_{A}U\ket{0}_{C}+\sqrt{\mu_{1}}\ket{1}_{A}U\ket{1}_{C}.
\label{eq:Unitary}
\end{eqnarray}
Let
$\rho_{AC}=\nu_0\ket{\phi_0}_{AC}\bra{\phi_0}+\nu_1\ket{\phi_1}_{AC}\bra{\phi_1}$
be the spectral decomposition of $\rho_{AC}$.
Then since $\mathrm{rank} (\rho_{AC})=2$,
the eigenvectors $\ket{\tilde{\phi_{0}}}=\sqrt{\nu_0}\ket{\phi_0}$
and $\ket{\tilde{\phi_{1}}}=\sqrt{\nu_1}\ket{\phi_1}$
should be linear combinations of $\ket{\psi_{0}}$ and $\ket{\psi_{1}}$.
It follows that
\begin{equation}
\ket{\Psi}_{ABC}=\ket{\tilde{\phi_{0}}}_{AC}\ket{0}_B+\ket{\tilde{\phi_{1}}}_{AC}\ket{1}_B,
\label{eq:psi}
\end{equation}
where $\ket{\tilde{\phi_{0}}}=x_0\ket{\psi_{0}}+x_1\ket{\psi_{1}}$
and $\ket{\tilde{\phi_{1}}}=y_0\ket{\psi_{0}}+y_1\ket{\psi_{1}}$.
Let
\begin{equation}
\ket{\Psi'}_{ABC}=\ket{\tilde{\phi_{0}}'}_{AC}\ket{0}_B+\ket{\tilde{\phi_{1}}'}_{AC}\ket{1}_B,
\label{eq:psi1}
\end{equation}
where $\ket{\tilde{\phi_{0}}'}=x_1^*\ket{\psi_{0}}+x_0^*\ket{\psi_{1}}$ and
$\ket{\tilde{\phi_{1}}'}=y_1^*\ket{\psi_{0}}+y_0^*\ket{\psi_{1}}$.
Then, by tedious but straightforward calculations,
we can check that
the partial transpose $\rho_{AB}^{T_B}$ of $\rho_{AB}$ is equal to
$\rho'_{AB}=\T_C(\ket{\Psi'}_{ABC}\bra{\Psi'})$,
and thus $\rho_{AB}$ has positive partial transposition (PPT).
Therefore, $\mathcal{C}_{AB}=0$.
\end{proof}

So far, we have seen the case that the monogamy equality holds in $2\otimes 2\otimes d$ systems.
We now exhibit a counterexample that the monogamy equality does not hold,
in particular, $\mathcal{C}_{AB}=0$ but $\mathcal{C}_{A(BC)} > \mathcal{C}_{AC}^{a}$.
\begin{Exam}
Consider two orthogonal states in the $2\otimes 3$ quantum system,
$\ket{x}=(\ket{02}+\sqrt{2}\ket{10})/{\sqrt{3}}$,
$\ket{y}=(\ket{12}+\sqrt{2}\ket{01})/{\sqrt{3}}$.
We now take into account the following state in the $2\otimes 2\otimes 3$ quantum system,
\begin{eqnarray}
\ket{\Psi}_{ABC}&=& \frac{1}{\sqrt{2}}\ket{x}_{AC}\ket{0}_B
+\frac{1}{\sqrt{2}}\ket{y}_{AC}\ket{1}_B
\nonumber \\
&=&\frac{1}{\sqrt{6}}\ket{002}_{ABC}+\frac{1}{\sqrt{3}}\ket{100}_{ABC}
\nonumber \\
&&+\frac{1}{\sqrt{6}}\ket{112}_{ABC}+\frac{1}{\sqrt{3}}\ket{011}_{ABC}.
\label{eq:ex_Psi}
\end{eqnarray}
Then since $\rho_{A}=(\ket{0}_{A}\bra{0}+\ket{1}_{A}\bra{1})/2$,
it is clear that $\mathcal{C}_{A(BC)}=1$,
and since $\rho_{AC}=(\ket{x}_{AC}\bra{x}+\ket{y}_{AC}\bra{y})/2$,
by the HJW theorem, for any decompositions $\rho_{AC}=\sum_{i}p_{i}\ket{\phi_{i}}_{AC}\bra{\phi_{i}}$,
$\sqrt{p_{i}}\ket{\phi_{i}}_{AC}=(c_{i1}\ket{x}_{AC}+c_{i2}\ket{y}_{AC})/\sqrt{2}$
for some unitary operator $(c_{ij})$ with $2p_{i}=|c_{i1}|^{2}+|c_{i2}|^{2}$.
Then
\begin{equation}
2p_{i}\T_{C}(\ket{\phi_{i}}_{AC}\bra{\phi_{i}})
=\frac{1}{3}
\begin{pmatrix}
  |c_{i1}|^{2}+2|c_{i2}|^{2} & c_{i1}c_{i2}^{*} \\
  c_{i2}c_{i1}^{*} & |c_{i2}|^{2}+2|c_{i1}|^{2}
\end{pmatrix},
\label{eq:decomp_matrix}
\end{equation}
and hence
\begin{equation}
\T_{C}(\ket{\phi_{i}}_{AC}\bra{\phi_{i}})
=\frac{1}{3}I_{A}+\frac{1}{3}\ket{\psi_{i}}_{A}\bra{\psi_{i}}
\label{eq:ex_trC}
\end{equation}
with $\ket{\psi_{i}}=(c_{i2}^*\ket{0}+c_{i1}^*\ket{1})/{\sqrt{2p_{i}}}$.
Thus we obtain that $\mathcal{C}_{AC}=\frac{2\sqrt{2}}{3}=\mathcal{C}_{AC}^{a}$.
Since
\begin{equation}
\rho_{AB}=\frac{1}{6}
\begin{pmatrix}
  1 & 0 & 0 & 1 \\
  0 & 2 & 0 & 0 \\
  0 & 0 & 2 & 0 \\
  1 & 0 & 0 & 1
\end{pmatrix}
\label{eq:ex_rhoAB}
\end{equation}
clearly has PPT,
$\mathcal{C}_{AB}=0$.
Therefore,
there exists a quantum state in the $2\otimes 2 \otimes 3$ system
such that $\mathcal{C}_{AB}=0$,
but $\mathcal{C}_{A(BC)}=1 > \frac{2\sqrt{2}}{3}=\mathcal{C}_{AC}^{a}$.
\end{Exam}


In conclusion,
we have considered the monogamy equality
in $2\otimes 2 \otimes d$ quantum systems.
We have shown that $\mathcal{C}_{A(BC)}=\mathcal{C}_{AB}$ if and only if $\mathcal{C}_{AC}^a=0$,
and have also shown that
if $\mathcal{C}_{A(BC)}=\mathcal{C}_{AC}^a$ then $\mathcal{C}_{AB}=0$,
while there exists a state in a $2\otimes 2 \otimes d$ system such that
$\mathcal{C}_{AB}=0$ but $\mathcal{C}_{A(BC)}>\mathcal{C}_{AC}^a$.
However, in $2\otimes 2 \otimes d$ quantum systems,
the monogamy inequality in terms of the concurrence and the CoA,
$\mathcal{C}_{A(BC)}^2\ge\mathcal{C}_{AB}^2+\left(\mathcal{C}_{AC}^a\right)^2$,
has been still unknown.

D.P.C. was supported by the Korea Science and Engineering Foundation
(KOSEF) grant funded by the Korea government (MOST) (No.~R01-2006-000-10698-0),
J.S.K was supported by Alberta's informatics Circle of Research Excellence (iCORE), and
S.L. was supported by the Korea Research Foundation Grant funded by the Korean Government
(MOEHRD, Basic Research Promotion Fund) (KRF-2007-331-C00049).


\end{document}